\newcommand{\sm}{\setminus}
\title{Connecting Terminals and 2-Disjoint Connected Subgraphs
}
\author{Jan Arne Telle
\and Yngve Villanger
}
\institute{Department of Informatics, University of Bergen, N-5020 Bergen, Norway. \\
\email{\{telle|yngvev\}@ii.uib.no}
}
\begin{document}

\maketitle

\begin{abstract}
Given a graph $G=(V,E)$ and a set of terminal vertices $T$ we say that
a superset $S$ of $T$ is $T$-connecting if $S$ induces a connected graph,
and $S$ is minimal if no strict subset of $S$ is $T$-connecting.
In this paper we prove that there are at most 
${|V \setminus T| \choose |T|-2} \cdot 3^{\frac{|V \setminus T|}{3}}$ minimal $T$-connecting sets when $|T| \leq n/3$
and that these can be enumerated within a polynomial factor of this bound. 
This generalizes the  algorithm for enumerating all induced paths 
between a pair of vertices, corresponding to the case $|T|=2$.
We apply our enumeration algorithm to solve the {\sc 2-Disjoint Connected Subgraphs} problem
in time $O^*(1.7804^n)$, improving on the recent $O^*(1.933^n)$ algorithm of Cygan et al. 2012 LATIN paper.
\end{abstract}

\section{Introduction}

The listing of all inclusion minimal combinatorial objects satisfying a certain property 
is a standard approach to solving certain $NP$-hard problems exactly. Some examples are 
the algorithms for {\sc Minimum Dominating Set} in time $O^*(1.7159^n)$ \cite{FominGPS08},
for {\sc Feedback Vertex Set} in time $O^*(1.7548^n)$ \cite{FominGPR08}, 
and for {\sc Minimal Separators} in time $O^*(1.6181)$  \cite{FominV12}. 
At the time of their appearance these algorithms were the fastest ones available.

This is an approach that usually requires little in the way of correctness arguments.
For example, in
the {\sc minimum dominating set} problem it is obvious that a dominating set of minimum cardinality is 
also an inclusion minimal dominating set. 
The main task in this approach is to firstly enumerate the inclusion minimal objects, preferably by an algorithm 
whose  runtime is within a polynomial factor of the number of such objects,
 and secondly to provide a good upper bound on the number of objects. 
Probably the most famous example is the polynomial delay enumeration algorithm 
for {\sc Maximal independent set} \cite{JohnsonPY88} 
where there are matching upper and lower bounds on the number of objects \cite{MoonM65}.

Another case with matching upper and lower bounds is the  $O^*(3^{\frac{n}{3}})$ folklore
algorithm enumerating all
induced paths between two fixed vertices $u$ and $v$ in an $n$-vertex graph\footnote{We have not been able to find a proof of this algorithm in the literature.
The graph in Figure 1, with $|R|=1$, shows optimality of the algorithm, up to polynomial factors.}.
In this paper we consider some generalizations of this graph problem.
We first generalize 
to the enumeration of induced paths starting in $v$ and ending in a vertex from a given 
set $R$, with no intermediate vertices in $N(R)$.
The algorithm we give for this generalization will be optimal, up to polynomial factors.
Given a subset of vertices $T$ let us say that a superset $S$ of $T$ is $T$-connecting if $S$ induces a connected graph,
and that $S$ is minimal $T$-connecting if no strict subset of $S$ is $T$-connecting.
Our main generalization is the following enumeration task:

\begin{flushleft}
{\sc Enumeration of Minimal $T$-Connecting Sets}\\
Input: A graph $G=(V,E)$ and a set $T \subseteq V$.\\
Output: All minimal $T$-connecting sets.
\end{flushleft}

Note that for the case $|T|=2$ the minimal $T$-connecting sets are in 1-1 correspondence with the set of induced paths 
between the two vertices of $T$. 
We give an algorithm for {\sc Enumeration of Minimal $T$-Connecting Sets}
with runtime $O^*( {n -|T| \choose |T|-2} \cdot 3^{\frac{n-|T|}{3}})$ where $|T| \leq n/3$. 
For $|T| > n/3$ a trivial $O^*(2^{n-|T|})$ brute force enumeration can be used. 
We apply this enumeration algorithm to solve the following problem:

\begin{flushleft}
{\sc 2-Disjoint Connected Subgraphs}\\
Input: A connected graph $G=(V,E)$ and two disjoint subsets of terminal vertices $Z_1,Z_2 \subseteq V$.\\
Question: Does there exist a partition $A_1,A_2$ of $V$, with 
$Z_1 \subseteq A_1, Z_2 \subseteq A_2$ and $G[A_1]$, $G[A_2]$ both connected?
\end{flushleft}

The general version of this problem with an arbitrary number of sets was used as one of the tools
in the result of Robertson and Seymour showing that {\sc Minor containment} can be solved in
polynomial time for every fixed pattern graph $H$ \cite{RobertsonS95b}.
We require the input graph to be connected since otherwise it is easy to reduce the problem to
a connected component. 

Let us look at some previous work on this problem.
Motivated by an application in computational geometry, Gray et al \cite{GrayKLS12} showed that 
{\sc 2-Disjoint Connected Subgraphs} is NP-complete on planar graphs. van't Hof et al \cite{HofPW09} showed that 
on general graphs it is NP-complete even when $|Z_1|=2$ and also that it remains NP-complete on $P_5$-free graphs 
but is polynomial-time solvable on $P_4$-free graphs.
Notice that the naive brute-force algorithm that tries all 2-partitions of non-terminal vertices
runs in time $O(2^kn^{O(1)})$, where $k=n-|Z_1 \cup Z_2|$.
This shows that {\sc 2-Disjoint Connected Subgraphs} is fixed-parameter tractable when parameterizing by the number of non-terminals.
However, Cygan et al \cite{CyganPPW12} show that
breaking this $O^*(2^k)$ barrier for the number $k$ of non-terminals would contradict the Strong Exponential Time Hypothesis, and
that a polynomial kernel for this parameterization would imply $NP \subseteq coNP/poly$.
Paulusma and van Rooij \cite{PaulusmaR11} gave
an algorithm with runtime $O^*(1.2051^n)$ for $P_6$-free graphs and asked whether it was possible to solve the
problem in general graphs faster than $O(2^n n^{O(1)})$.
This question was recently answered affirmatively by Cygan et al \cite{CyganPPW12}
who gave an algorithm for {\sc 2-Disjoint Connected Subgraphs} on general graphs, based on the branch and reduce technique, with runtime $O^*(1.933^n)$.

Our algorithm for {\sc 2-Disjoint Connected Subgraphs} on general graphs will be based on {\sc Enumeration of Minimal $T$-Connecting Sets} and have runtime $O^*(1.7804^n)$.

Our paper is organized as follows. In Section 2 we give 
the main definitions. 
In Section 3 we address the enumeration of induced paths starting in $v$ and ending in a vertex from a given 
set $R$, with no intermediate vertices in $R$.
In Section 4 we
give an algorithm for 
{\sc Enumeration of Minimal $T$-Connecting Sets}.
In Section 5 we apply this enumeration algorithm to solve 
the {\sc 2-Disjoint Connected Subgraphs} problem.
We end in Section 6 with some questions.

\section{Definitions}

We deal with simple undirected graphs and use standard terminology. 
For a graph $G=(V,E)$ and $S \subseteq V$ we denote by $G[S]$ the graph induced by $S$.
An induced subgraph $G[S]$ for $S \subset V$ is called connected if any pair of vertices of $S$ 
are connected by a path in $G[S]$.
We may also denote the vertex set of a graph $G$ by $V(G)$.
We denote by
$N[S]$ the set of vertices that are in $S$ or have a neighbor in $S$, and let $N(S)=N[S] \setminus S$.

A path $P$ of a graph $G$ is a sequence of vertices $(v_1,v_2,\ldots,v_q)$ such that 
$v_jv_{j+1} \in E$ for $1 \leq j < q$, and the path is called induced if $G[\{v_1,v_2,\ldots,v_q\}]$ 
has no other edges.
A subpath of $P$ is of the form $(v_1,v_2,\ldots,v_i)$ for some $i \leq q$.

Contracting an edge $uv$ into vertex $v$ in a graph $G$ is defined as the operation of 
adding, for every vertex $w \in N(u) \sm N[v]$, the edge $vw$ to $G$ if it is not already present, 
and then deleting $u$ and all edges incident to $u$.
Notice that a graph is connected after the contraction operation if and only if it was connected before the 
contraction operation. 


Given a graph $G = (V,E)$, a vertex set $T \subset V$, a vertex $v_1 \in V \sm T$, and 
an induced path $P=(v_1, v_2,...,v_q)$ in $G[V \sm T]$, 
we define the {\it branch depth} of path $P$ to be $$b(P) = |N[\{v_1,v_2,...,v_{q-1}\}]|-1.$$

\section{Induced paths from a vertex to a set of vertices}

It is folklore knowledge that the set of induced paths between a pair of vertices in an $n$-vertex graph can be 
enumerated in $O^*(3^{\frac{n}{3}})$ time. 
We have not been able to find a written proof of this in the literature.
In the following theorem the induced paths between a pair of vertices is a special case, thus providing a generalization of a well known result. 

\begin{theorem}
\label{new}
Given a graph $G=(V,E)$, 
a vertex $v \in V$ and 
$R \subseteq V \sm N[v]$, we can enumerate all induced paths from $v$ to a vertex of $N(R)$, with no intermediate vertex in $N[R]$,
in time  $O^*(3^{\frac{|V \setminus R|}{3}})$. 
\end{theorem}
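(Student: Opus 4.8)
The plan is to build the induced paths one vertex at a time via a branching algorithm, and to charge the branching cost against vertices that get "used up" — that is, against $N[\{v_1,\dots,v_{q-1}\}]$, which by definition has size $b(P)+1$ and must avoid $R$ entirely except possibly at the last step. Concretely, I would maintain a partial induced path $(v_1,\dots,v_i)$ together with the set $F$ of vertices already forbidden for future extension, initially $F = N[R] \cup (N[v_1]\setminus\{v_2\})$ after fixing the first edge, and in general $F$ grows to include $N[\{v_1,\dots,v_{i-1}\}]$ plus $N[R]$. At each step the algorithm looks at the current endpoint $v_i$ and branches on which neighbor $v_{i+1}\in N(v_i)\setminus F$ to append next (or declares the path complete if $v_i\in N(R)$, which first becomes possible exactly when we are allowed to touch $N(R)$). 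Since $v_i$ has at most its full neighborhood available and one of those neighbors is $v_{i-1}$, the branching factor is governed by $|N(v_i)\setminus F| \le |N(v_i)| - 1$ roughly; the key is that appending $v_{i+1}$ moves $v_i$ and all of $N(v_i)$ permanently into $F$.

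The heart of the argument is the $3^{1/3}$ bound, and here I would reuse the standard trick behind the folklore induced-path enumeration. When we extend from $v_i$ to some $v_{i+1}$, the vertices $\{v_i\}\cup N(v_i)$ become forbidden. If $v_i$ has at least two available neighbors, then every branch "consumes" $v_i$ plus that chosen neighbor plus any other available neighbors of $v_i$ — at least $3$ fresh vertices of $V\setminus R$ get absorbed into the forbidden set per branching node whenever the degree of freedom is $\ge 2$, giving the recurrence $T(m) \le \max_{d\ge 2} d\cdot T(m-d-1)$ on $m = |V\setminus R| - |F\cap (V\setminus R)|$, whose solution is $O^*(3^{m/3})$ since $d^{1/(d+1)}$ is maximized at $d=2$. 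Degree-$\le 1$ extension steps are deterministic (no branching) and only shorten the remaining budget, so they don't hurt. Paths that end early (when $v_i \in N(R)$) are simply output at that node; a path can also terminate because $N(v_i)\setminus F = \emptyset$ without reaching $N(R)$, in which case it is discarded. Each leaf corresponds to at most one induced path, and by construction every induced path from $v$ to $N(R)$ avoiding $N[R]$ internally is produced exactly once (the sequence of append-choices is forced by the path itself).

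I would then argue correctness in the usual two directions: (soundness) any sequence output is an induced path because we forbid $N[\{v_1,\dots,v_{i-1}\}]$ before choosing $v_i$, so no chords can appear and no intermediate vertex lies in $N[R]$; (completeness) given a target induced path $P$, the algorithm has a branch at every step selecting the correct next vertex, since that vertex is always among the available neighbors of the current endpoint. The per-node work is polynomial (computing neighborhoods, updating $F$), so the total runtime is within a polynomial factor of the number of leaves, i.e. $O^*(3^{|V\setminus R|/3})$.

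The main obstacle I anticipate is making the amortization airtight in the degenerate cases: a branching node where the current endpoint has exactly one available neighbor contributes a factor $1$ but absorbs only $\le 2$ new vertices, and a node with zero available neighbors is a dead leaf — I need to verify that chains of such low-degree steps cannot accumulate in a way that breaks the $3^{m/3}$ accounting. This is handled by noting that the recurrence $T(m) \le T(m-2)$ for $d=1$ and $T(m)\le 1$ for $d=0$ are both dominated by $3^{m/3}$, so interleaving them with genuine $d\ge 2$ branchings only improves the bound; one also has to be a little careful that when we first reach a vertex of $N(R)$ we both output the completed path and, if that vertex has further non-$R$ neighbors, potentially continue — but continuing would force an intermediate vertex into $N(R)$, so in fact such paths simply terminate, which only helps.
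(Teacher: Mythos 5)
Your approach is essentially the paper's: grow the path by a backtracking algorithm that forbids $N[\{v_1,\dots,v_{i-1}\}]$ when extending past $v_i$, and charge the branching degrees along any root-to-leaf branch of the choice tree against the vertices of $N[\{v_1,\dots,v_{q-1}\}] \subseteq V\setminus R$, so that the number of leaves is bounded by the maximum product of positive integers whose sum is at most $|V\setminus R|$, i.e.\ $3^{|V\setminus R|/3}$; the paper packages exactly this as the notion of branch depth plus the tree lemma (Lemma~\ref{comb}), while you phrase it as a measure-based recurrence. Two slips in your write-up deserve fixing, though neither destroys the bound. First, as literally described your forbidden set $F$ contains all of $N[R]$ from the start, so no vertex of $N(R)$ could ever be appended and the algorithm would output nothing; your closing paragraph shows you intend to allow a vertex of $N(R)$ to be chosen and to terminate the path there, so $F$ should contain $R$ and the explored closed neighborhoods, with membership in $N(R)$ used only as the stopping test. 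Second, your accounting counts $v_i$ as freshly consumed at each branching, but $v_i$ already entered $F$ when it was chosen as a neighbor of $v_{i-1}$; a degree-$d$ branching consumes only the $d$ available neighbors, so the correct recurrence is $T(m)\le \max_d\, d\cdot T(m-d)$, whose solution is $3^{m/3}$ attained at $d=3$ (incidentally, $d^{1/(d+1)}$ is maximized at $d=4$, not $d=2$, though all such values lie below $3^{1/3}$, which is why your stated conclusion survives). With these corrections your argument coincides with the paper's proof of Lemma~\ref{pr:basic_path} applied with $t=|V\setminus R|-1$.
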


\begin{figure}\label{fig1}
\centering
\includegraphics[scale=0.4]{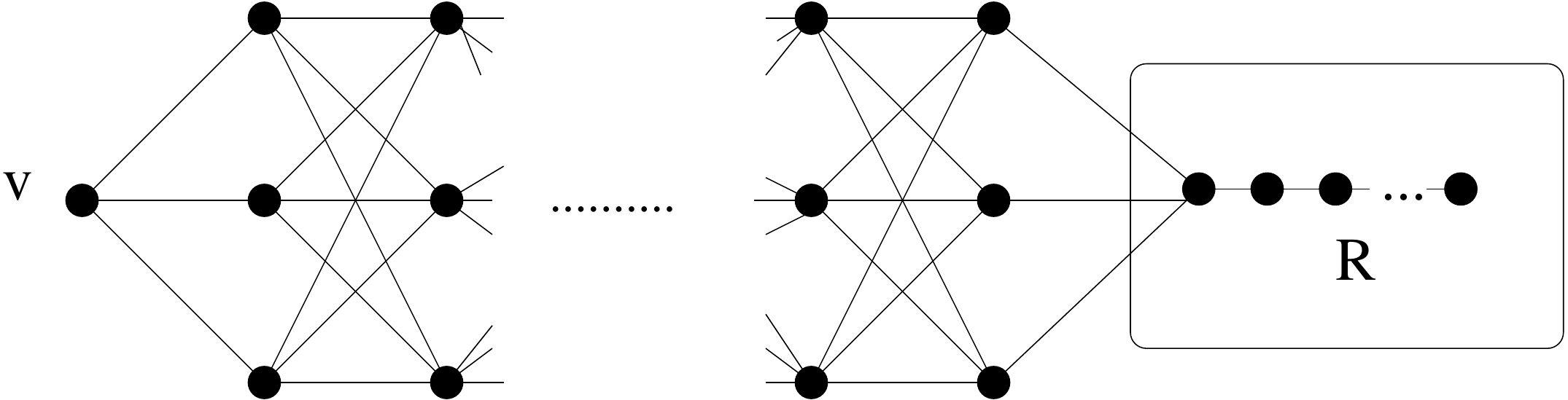}
\caption{The number of induced paths between vertex v and a vertex of $N(R)$ (the rightmost column of 3 vertices) is 
$3^{\frac{n-|R|-1}{3}}$. Since each such path $P$ has branch depth $b(P)=n-|R|-1$ this graph shows tightness of 
Lemma \ref{pr:basic_path} when $b(P)$ is a multiple of 3. If $b(P)=3i+1$ then replace one column of 3 vertices by 4 vertices
and if $b(P)=3i+2$ add a new column of 2 vertices.
$R$ induces a connected graph so the number of minimal $R \cup \{v\}$-connecting sets is also $3^{\frac{n-|R|-1}{3}}$.}
\end{figure}

We actually want the paths from $v$ to $R$, but since these paths must have the second-to-last vertex in $N(R)$ we state the result as above. 
Theorem \ref{new} will follow from Lemma \ref{pr:basic_path}, which is stated in terms of branch depth of paths in order to be used for the branching algorithm in the next section. Since the branch depth of each induced path from $v$ to $N(R)$, with no intermediate vertex in $N(R)$, is at most $|V \setminus R|-1$, Theorem \ref{new} will follow from Lemma \ref{pr:basic_path} below and is tight up to polynomial factors, see Figure 1.
We start with a combinatorial lemma.

\begin{lemma}\label{comb}
Fix a non-negative integer $t$ and let $T$ be a rooted tree where any root-to-leaf path $v_1,v_2,...,v_q$ has
$\Sigma_{1 \leq i \leq q} \mbox{ } c(v_i) \leq t$, with $c(v)$ the number of children of node $v$.
The maximum number of leaves that $T$ can have is $l(t)$ with $l(1)=1$ and for $t \neq 1$
$$
l(t) = \left\{
\begin{array}{ll} 3^i & \mbox{if $t=3i$,} \\
4 \cdot 3^{i-1} & \mbox{if $t=3i+1$,}\\
2 \cdot 3^i & \mbox{if $t=3i+2$}.\\
\end{array}
\right.
$$
\end{lemma}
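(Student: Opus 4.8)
The plan is to prove this by induction on $t$, analyzing how the root's children partition the tree. Let me set up the recursion. Suppose the root $v_1$ has $c(v_1) = k$ children, and let $T_1, \ldots, T_k$ be the subtrees rooted at those children. Every root-to-leaf path in the whole tree starts with the edge of weight $k$ (contributed by $v_1$), so each subtree $T_j$ satisfies the hypothesis with budget $t - k$ instead of $t$. Hence the number of leaves is at most $k \cdot l(t-k)$, and so we obtain the recurrence
\begin{equation*}
l(t) = \max_{1 \le k \le t} \; k \cdot l(t-k),
\end{equation*}
with base case $l(0) = 1$ (a single root, no children; note also $l(1) = 1$ forced since one child leaves budget $0$). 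The whole lemma then reduces to solving this max-recurrence and checking it equals the claimed closed form.

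The combinatorial heart is the elementary fact that to maximize a product of positive integers with a fixed sum, one should use parts equal to $3$ as much as possible, using a $2$ or a pair of $2$'s (equivalently a $4$) to handle the remainder — precisely because $3 > 2^{3/2}$, $2\cdot 2 > 3$ is false while $2 \cdot 2 = 4 > 3$... wait, $2+2 = 4 = 3 + 1$ and $2 \cdot 2 = 4 > 3 \cdot 1 = 3$, and $2+2+2 = 6 = 3+3$ with $8 > 9$ false, so three $2$'s lose to two $3$'s. First I would make this precise: for $t = 3i$ the optimum is $3^i$ (take $k=3$ repeatedly, i.e.\ $l(3i) = 3 \cdot l(3i-3)$); for $t = 3i+1$ with $i \ge 1$ it is $4 \cdot 3^{i-1}$ (peel off a $4$, since $4 \cdot l(3i-3) = 4\cdot 3^{i-1}$ beats $3 \cdot l(3i-2) = 3 \cdot 2 \cdot 3^{i-1} \cdot \tfrac{1}{...}$ — I'd verify $4 \cdot 3^{i-1}$ vs.\ $3 \cdot (2\cdot 3^{i-1})$, and $4 < 6$, hmm, so actually need $t=3i+1$ handled as $k=4$ giving $4\cdot l(3i-3)$ versus $k=3$ giving $3 \cdot l(3i-2) = 3 \cdot 2 \cdot 3^{i-1} = 6\cdot 3^{i-1}$? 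That contradicts the claim). Let me recheck with the claimed values: $l(4) = 4\cdot 3^0 = 4$, and $3 \cdot l(1) = 3\cdot 1 = 3$, and $4 \cdot l(0) = 4$, and $2 \cdot l(2) = 2 \cdot 2 = 4$ — so $l(4) = 4$, consistent, achieved by $k=4$ or $k=2$. For $l(5)$: claim gives $2\cdot 3 = 6$; candidates $3\cdot l(2) = 6$, $2 \cdot l(3) = 6$, $5 \cdot l(0) = 5$, $4\cdot l(1) = 4$; so $l(5) = 6$. Good. So the induction step requires checking finitely many residue cases and confirming the claimed formula satisfies $l(t) \ge k\, l(t-k)$ for all $k$ with equality for the stated optimal choice.

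The main obstacle — really the only nontrivial point — is the careful residue-class bookkeeping in the induction step: one must verify, for each of the three forms of $t$ and for the maximizing $k \in \{2,3,4\}$, that $l(t) = k \cdot l(t-k)$ using the inductive formula for $l(t-k)$, and that no other choice of $k$ does better. This is a short but slightly fiddly case analysis (with small-$t$ base cases $t = 0,1,2,3,4$ handled by hand, paying attention to the exceptional value $l(1)=1$). I would organize it as: (i) establish the recurrence $l(t) = \max_k k\cdot l(t-k)$; (ii) prove the closed form satisfies the recurrence by induction, splitting on $t \bmod 3$; (iii) exhibit the extremal tree (a complete ternary-ish tree with root degrees $3,3,\ldots,3$ and one modified level of degree $2$ or $4$) to show the bound is attained, though attainment is implicit in the recurrence being tight. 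The connection to Lemma~\ref{pr:basic_path} and Theorem~\ref{new}, with $t$ playing the role of the branch depth bound, is then immediate from how the branching algorithm's recursion tree fits this template.
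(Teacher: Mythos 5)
Your proposal is correct and in essence the same as the paper's proof: both decompose the tree at the root so that the leaf count is bounded by a product of child-counts summing to at most $t$, and both rest on the classical fact that such a product is maximized by parts equal to $3$, with a $2$ or a $4$ (i.e.\ $2\cdot 2$) absorbing the remainder, together with the exceptional value $l(1)=1$. The only difference is bookkeeping: you verify the closed form against the recurrence $l(t)=\max_{1\le k\le t} k\cdot l(t-k)$ by induction on $t$ split by residue mod $3$ (the fiddly step you rightly identify, including ruling out large $k$), whereas the paper first argues that some optimal tree has uniform child-counts per level, so its leaf count is a product of integers summing to $t$, and then pins down that product by exchange arguments (no part $\ge 4$, at most two $2$'s) without residue-by-residue induction.
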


\begin{proof}
We first show that for any $t$ there is a tree $U_t$ achieving the maximum, where all nodes at the same level have the same number of children.
For any $t$ let $T_t$ be any rooted tree achieving the maximum. Define $r(t)$ as the number of children of the root of $T_t$. 
In the tree $U_t$ all nodes at level $i \geq 1$ will have $u(i)$ children, with $u(i)$ defined level-by-level as follows.
The root of $U_t$ will have the same number of children as the root of $T_t$, in other words we define $u(1)=r(t)$. 
The sum of the number of children of nodes on any path from a child of the root of $U_t$ to a leaf of $U_t$ should be $t-u(1)$, thus
nodes at level 2 of $U_t$ should have the same number of children as the root of $T_{t-u(1)}$, in other words
we define $u(2)=r(t-u(1))$. Continuing like this we get that in general
$u(i)=r(t - \Sigma_{1 \leq j < i} \mbox{ } u(j))$. 
By induction on $t$ it follows that $U_t$ has as many leaves as $T_t$ and any root-to-leaf path has $t$ children.

Assume $U_t$ has $p$ levels. We then have that $u(1)+u(2)+...+u(p-1)=t$ and that $u(1)\cdot u(2)...\cdot u(p-1)$, the number of leaves
of $U_t$, is maximized. Since the product of these integers is maximized we can assume that we have no integer $x \geq 4$ among them since then we could replace $x$ by $2\cdot (x-2) \geq x$ which does not decrease the product nor changes the sum of the integers. 
Also, if $2$ appears then it appears at most twice since we could replace $2\cdot 2\cdot 2$ by $3\cdot 3 > 2\cdot 2\cdot 2$. This implies that the number of leaves in $U_t$ is $l(t)$ as stated in the Lemma.
\end{proof}

Note that $l(t)$ is the maximum number of maximal independent sets in a graph on $t$ vertices \cite{MoonM65,FomKra}. For the connection to 
the largest integer which is
the product of positive integers with sum $t$ see e.g.
 \cite{Vatter:2011:MIS}.

\begin{lemma}\label{pr:basic_path}
Given a graph $G=(V,E)$, 
a vertex $v_1 \in V$, 
$R \subseteq V \sm N[v_1]$, and  
an integer $t$.
Then there exist at most $l(t)$ induced paths $P=(v_1, v_2,...,v_q)$ in $G$ such that 
\begin{itemize}
 \item $b(P) \leq t$, 
 \item $v_i \not\in N[R]$ for $1 \leq i \leq q-1$, and 
 \item $v_q \in N(R)$.
\end{itemize}
Furthermore all these paths can be enumerated in $O^*(3^{\frac{t}{3}})$ time. 
\end{lemma}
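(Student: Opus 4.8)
The plan is to build a search tree that mirrors the structure of the tree $T$ in Lemma~\ref{comb}, so that the bound $l(t)$ on the number of leaves of $T$ translates directly into a bound on the number of induced paths we enumerate, and the $O^*(3^{t/3})$ running time follows since $l(t) = O(3^{t/3})$. We grow the path one vertex at a time, starting from $v_1$. Suppose we have already committed to an induced path prefix $(v_1,\dots,v_i)$ with $v_i \notin N[R]$; we branch on the choice of the next vertex $v_{i+1}$. The crucial observation is that $v_{i+1}$ must be a neighbor of $v_i$, it must not create a chord (so $v_{i+1} \notin N[\{v_1,\dots,v_{i-1}\}]$), and it must not already lie in $N[R]$ unless it is the final vertex; moreover once we are at a vertex $v_i \in N(R)$ we may (and do) stop, since that yields a valid output path. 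So the candidates for $v_{i+1}$ form a subset of $N(v_i) \setminus N[\{v_1,\dots,v_{i-1}\}]$, a set whose size is exactly the number of ``fresh'' vertices newly dominated when $v_i$ is added to the path.

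The key accounting step is to relate the number of branches at each node of the search tree to the increase in $|N[\{v_1,\dots,v_i\}]|$, i.e.\ to the branch depth. Define, along a root-to-leaf path in the search tree corresponding to an induced path $(v_1,\dots,v_q)$, the quantity $c(v_i)$ to be the number of children of $v_i$ in the search tree, which is at most $|N[v_i] \setminus N[\{v_1,\dots,v_{i-1}\}]|$ for $i < q$ (at the last vertex $v_q \in N(R)$ we stop, so it contributes no children). Then
\[
\sum_{1 \le i \le q} c(v_i) \;\le\; \sum_{1 \le i \le q-1} \bigl| N[v_i] \setminus N[\{v_1,\dots,v_{i-1}\}] \bigr| \;=\; |N[\{v_1,\dots,v_{q-1}\}]| \;=\; b(P) + 1.
\]
Wait --- this gives $b(P)+1$, not $b(P)=t$; the off-by-one is absorbed because the very first vertex $v_1$ contributes $|N[v_1]|$ which includes $v_1$ itself, so in fact the correct bookkeeping (counting, for each $i<q$, only the genuinely new non-$v_1$ vertices, equivalently starting the telescoping sum from the set $\{v_1\}$ already dominated) yields $\sum c(v_i) \le |N[\{v_1,\dots,v_{q-1}\}]| - 1 = b(P) \le t$. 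With this, Lemma~\ref{comb} applies verbatim to the search tree: it has at most $l(t)$ leaves, hence at most $l(t)$ induced paths satisfying the three bulleted conditions are produced.

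For the enumeration itself, the algorithm is the depth-first traversal of this search tree just described: at each node with current path $(v_1,\dots,v_i)$, if $v_i \in N(R)$ output the path (and optionally also continue), otherwise for each candidate $w \in N(v_i) \setminus N[\{v_1,\dots,v_{i-1},R\}]$ recurse on $(v_1,\dots,v_i,w)$; also recurse separately for each $w \in N(v_i) \cap N(R)$ as a terminating extension. Each node does polynomial work, and the number of leaves is $l(t) = O(3^{t/3})$ by Lemma~\ref{comb}, while the number of internal nodes is within a polynomial factor of the number of leaves (each root-to-leaf path has length at most $n$), giving the claimed $O^*(3^{t/3})$ bound. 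The main obstacle I anticipate is getting the branch-depth bookkeeping exactly right: one must be careful that vertices already in $N[R]$ are excluded from candidate sets without being counted against the branch-depth budget, that the final vertex $v_q$ in $N(R)$ is handled so it does not contribute children, and that the telescoping sum of ``newly dominated'' vertices lines up with the definition $b(P) = |N[\{v_1,\dots,v_{q-1}\}]| - 1$ rather than being off by one; once that is pinned down, the reduction to Lemma~\ref{comb} is immediate.
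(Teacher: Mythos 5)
Your proof is correct and takes essentially the same approach as the paper's: a backtracking search tree whose candidates at each step are $N(v_i)\setminus N[\{v_1,\dots,v_{i-1}\}]$, a telescoping argument showing the total branching along any root-to-leaf path is exactly $b(P)\le t$ (you correctly resolve the off-by-one at $v_1$, which is precisely why $b(P)$ is defined with the $-1$), followed by an application of Lemma~\ref{comb}.
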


\begin{proof}
The enumeration algorithm will be a standard backtracking algorithm starting in 
$v_1$ that checks all choices. 
At the first step the choices for $v_2$ are the vertices in $N(v_1)$.
In general, when we have a subpath $P=(v_1, v_2,...,v_i)$, if $v_i \not \in N(R)$
the choices for $v_{i+1}$ are 
the vertices in $N(v_i) \setminus N[\{v_1,v_2,...,v_{i-1}\}]$. 
If $v_i \in N(R)$ then we have a leaf in the tree $T$ of all possible choices.
Thus,
in the rooted tree $T$ of all possible choices, if we label the nodes of $T$ with the vertex chosen,
the set of paths from the 
root to a leaf in $T$ will be in 1-1 correspondence with the set of paths satisfying the statement in the Lemma 
without any bound on branch depth.

Consider such a path $P=(v_1, v_2,...,v_q)$. By definition the branch depth of $P$ is
$b(P)= |N[\{v_1,v_2,...,v_{q-1}\}]|-1$. Consider the leaf-to-root path $P_T$ in $T$ corresponding to $P$.
For any $1 \leq i < q-1$ the children of the node in $P_T$ labelled $v_{i}$ have labels $N(v_i) \setminus N[\{v_1,v_2,...,v_{i-1}\}]$, and the node labelled $v_q$ is a leaf. Thus the children of all nodes of $P_T$ have distinct labels and the union of all these labels is
$N[\{v_1,v_2,...,v_{q-1}\}]\setminus \{v_1\}$. Thus
the sum of the number of children over all nodes on $P_T$
is exactly $b(P)$.

Consider any rooted tree $T$ having the property that for any root-to-leaf path the sum of the number of children of all nodes on this path is at most $t$. Lemma \ref{comb} bounds the number of leaves in such a tree to $l(t)$. 
By the above observations, and the fact that
$l(t) \leq 3^{t/3}$ since $2 \leq 3^{2/3}$ and $4 \leq 3 ^{4/3}$, this proves the Lemma.

\end{proof}

This enumeration algorithm is optimal to within polynomial factors, see Figure 1.
%

\section{Enumeration of Minimal $T$-Connecting Sets}

Theorem \ref{new} with $|R|=1$ can be viewed as an enumeration of all minimal $T$-connecting sets when $T=\{u,v\}$. We now generalize this approach to an arbitrary terminal set $T$ by a branching algorithm.
The following observation will be used to simplify our branching algorithm.

\begin{lemma}\label{le:contraction}
Given $G=(V,E)$, $T \subseteq V$, and two vertices $u,v \in T$ such that $uv \in E$. 
Let $G'$ be the graph obtained by contracting edge $uv$ into $v$. 
Then there is a one to one mapping between Minimal $T$-Connecting Sets in $G$ and 
Minimal $T \sm \{u\}$-Connecting Sets in $G'$. 
\end{lemma}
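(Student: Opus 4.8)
The plan is to exhibit an explicit bijection and argue that it preserves both the connectivity property and minimality.

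First I would set up the correspondence. Given a minimal $T$-connecting set $S$ in $G$, note that since $u,v \in T \subseteq S$ and $uv \in E$, both $u$ and $v$ lie in $S$. Define the map $\phi(S) = (S \sm \{u\})$ viewed as a vertex subset of $G'$ (recall $V(G') = V \sm \{u\}$). In the other direction, given a minimal $(T \sm \{u\})$-connecting set $S'$ in $G'$, define $\psi(S') = S' \cup \{u\}$ as a subset of $V(G)$; this is legitimate because $v \in T \sm \{u\} \subseteq S'$, so $u$ has a neighbor in $S'$. It is immediate that $\psi$ and $\phi$ are mutually inverse as set maps, so the real content is showing that $\phi$ sends minimal $T$-connecting sets in $G$ exactly onto minimal $(T\sm\{u\})$-connecting sets in $G'$, and symmetrically for $\psi$.

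Next I would verify the connectivity transfer in both directions. The key elementary fact, already noted in the Definitions section, is that contracting an edge preserves connectivity of a graph; more precisely, for any vertex set $X \subseteq V$ with $u,v \in X$, the induced subgraph $G[X]$ is connected if and only if $G'[X \sm \{u\}]$ is connected, since $G'[X\sm\{u\}]$ is exactly the graph obtained from $G[X]$ by contracting the edge $uv$. Applying this with $X = S$ shows $G[S]$ connected $\iff G'[\phi(S)]$ connected, and since $T \subseteq S \iff (T \sm \{u\}) \subseteq \phi(S)$ (using $u \in S$ always and $v \in S$ always), we get that $S$ is $T$-connecting in $G$ iff $\phi(S)$ is $(T\sm\{u\})$-connecting in $G'$. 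The same computation run backwards handles $\psi$.

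Finally I would handle minimality, which I expect to be the main obstacle because one must be careful that "removing a vertex" behaves the same way on both sides. Suppose $S$ is $T$-connecting in $G$ but $\phi(S)$ is not minimal, so some vertex $w \in \phi(S)$ with $w \notin T \sm \{u\}$ can be removed keeping $\phi(S) \sm \{w\}$ a $(T\sm\{u\})$-connecting set in $G'$; note $w \neq v$ since $v \in T \sm \{u\}$, hence $w \in S$ and $w \notin T$ (as $w \neq u$ too). Then $S \sm \{w\} = \psi(\phi(S)\sm\{w\})$ contains $u$ and $v$, and by the connectivity equivalence above $G[S\sm\{w\}]$ is connected and contains $T$, contradicting minimality of $S$. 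The symmetric argument: if $S'$ is minimal $(T\sm\{u\})$-connecting in $G'$ but $\psi(S') = S' \cup \{u\}$ is not minimal in $G$, some $w \in S'\cup\{u\}$ with $w \notin T$ is removable; here $w \neq u$ is not automatic, so I would treat two cases. If $w \neq u$, then $w \in S'$, $w \notin T\sm\{u\}$, and $G[(S'\cup\{u\})\sm\{w\}]$ connected forces (via the contraction equivalence, both $u,v$ still present) $G'[S'\sm\{w\}]$ connected, contradicting minimality of $S'$. If $w = u$, then $G[S']$ is connected with $T \subseteq S'$; but $v \in S'$ and $v \in T\sm\{u\}$, and contracting $uv$ in $G[S' \cup \{u\}]$ — wait, $u \notin S'$, so instead observe directly that $G[S']$ connected together with $u$ having neighbor $v \in S'$ means... actually the cleaner route is: $G'[S']$ and $G[S']$ have the same edge set among $S'$ except possibly edges created by the contraction, i.e. edges $vx$ for $x \in N_G(u) \sm N_G[v]$; so $G[S']$ connected does not obviously give $G'[S']$ connected. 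The right fix is to rule out $w = u$ from the start by a direct argument: since $\psi(S')$ arises as $S' \cup \{u\}$ with $S'$ already connected in $G'$ and hence (by the equivalence applied to $X = S' \cup \{u\}$) $G[S'\cup\{u\}]$ is connected, but removing $u$ leaves $G[S']$, whose connectivity would give via the equivalence that $G'[S' \sm \emptyset]$... I would instead simply note that minimality of $S$ in $G$ need only be checked against non-terminal vertices, and present the $w=u$ subcase by showing $G'[S']$ connected is equivalent to $G[S'\cup\{u\}]$ connected (the contraction statement with $X = S'\cup\{u\}$), so if additionally $G[S']$ were connected we could delete $u$ and still, trivially, $S'$ is connected in $G'$ — but that does not contradict anything about $S'$. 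Hence the genuinely careful point, and the one I would spend the most effort on, is establishing that in $\psi(S')$ the vertex $u$ is never superfluous, which follows because $u \notin T\sm\{u\}$ is a non-terminal of the instance $(G',T\sm\{u\})$ but $u$ is a \emph{terminal} of the instance $(G,T)$, so minimality of $\psi(S')$ as a $T$-connecting set does not require deleting $u$. Once that asymmetry is correctly accounted for, all four implications close and the map is a bijection.
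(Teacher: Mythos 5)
Your proof is correct and follows essentially the same route as the paper: the bijection $S \mapsto S \setminus \{u\}$, connectivity preserved by edge contraction, and minimality in the reverse direction secured by the observation that $u \in T$ so no $T$-connecting subset can ever omit $u$ --- which is exactly the paper's one-line justification ``it is also minimal as $u \in T$.'' The meandering in your last paragraph about the $w=u$ subcase resolves to precisely that observation, so nothing is missing, though the write-up could be tightened considerably.
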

\begin{proof}
For every Minimal $T$-Connecting Set $S$ in $G$ we can contract edge $uv$ and obtain 
a minimal $T \sm \{u\}$-Connecting Sets $S' = S \sm \{u\}$ in $G'$. 
For every minimal $T \sm \{u\}$-Connecting Sets $S'$ in $G'$ we can observe that $G[S' \cup \{u\}]$ is a 
$T$-Connecting Set in $G$ and it is also minimal as $u \in T$.
\end{proof}

Consider {\bf Algorithm Main Enumeration}. It
will solve {\sc Enumeration of Minimal $T$-Connecting Sets} for any graph $G=(V,E)$ and $T \subseteq V$.
Let us first give the informal intuition for the algorithm. We fix a vertex $u \in T$ and using the algorithm of Lemma \ref{pr:basic_path} we find all induced paths from $u$ to $N(T \setminus \{u\})$. For each of these paths $P$ we again call the
algorithm  of Lemma \ref{pr:basic_path}, but now on the graph $G'$ where the path $P$ together with the vertices of $T$ that $P$ has in its neighborhood, is contracted into $u$. The path we find in $G'$ will in $G$ start at some vertex of $P$ or a neighbor in $T$ and we see that we start forming a tree of paths.
We carry on recursively in this way until the collection of paths spans all of $T$,
note however that the vertices of these paths may induce a graph containing cycles.
To avoid repeating work we label vertices by a total order and use this ordering to guide the recursive calls.

\begin{figure*}[t]
\normalsize
\begin{tabbing}

xxx\=xxx\=xxx\=xxx\=xxx\=xxx\=xxx\=xxx\=xxx\= \kill
{\bf Algorithm Main Enumeration}\\
{\bf Input:} A graph $G = (V,E)$ and terminal set $T \subseteq V$\\
{\bf Output:} A family of sets containing all Minimal $T$-Connecting Sets\\
{\bf begin}\\
\> {\bf assign} each vertex a unique $label$ between 1 and $|V|$\\
\>{\bf choose} $u \in T$\\
\>{\bf MCS}$(\emptyset, \emptyset)$\\
{\bf end}\\
\\
{\bf Procedure MCS$(C,X)$}\\
{\bf Parameter $C$}: vertex set used to connect $T$\\
{\bf Parameter $X$}: vertices not to explore in this call \\
{\bf begin}\\
{\bf if} $G[T \cup C]$ is connected then {\bf output} $T \cup C$\\
{\bf else}\\
\>{\bf set} $C_u \supseteq C$ as vertex set of connected component of $G[T \cup C]$ containing $u$\\
\>{\bf set} $T'=T \setminus C_u$ i.e. the terminals not yet connected to $u$ by $C$ \\
\>{\bf set} $G'$ to be graph obtained from $G$ by contracting edges of $G[C_u]$ to $u$\\
\>{\bf call} the algorithm of Lemma \ref{pr:basic_path} on $G'[V(G') \sm X]$ with $v_1=u$ and $R=T'$\\
\>{\bf for} every path $P = (v_1,v_2,\ldots,v_q)$ output by that call\\ 
\>\> {\bf MCS}$(C \cup \{v_2,\ldots,v_q\}, \mbox{ } X \cup \{w \in N(C_u): label(w) < label(v_2)\})$\\
\>{\bf end-for}\\
{\bf end}

\end{tabbing}
\label{alg:enum}
\end{figure*}

\begin{lemma}\label{le:connected}
Given $G=(V,E)$, $T \subseteq V$ and $|T| \leq n/3$ Algorithm Main Enumeration will:
\begin{enumerate}
 \item output every Minimal $T$-Connecting Set of $G$,
 \item output, for any integer $r \in [0..|V \setminus T|]$, at most ${|V \sm T| \choose |T|-2} \cdot 3^{r/3}$ vertex sets 
 $S \supseteq T$ such that $|N[S] \sm T| \leq r$, and 
 \item run in $O^*( {|V \setminus T| \choose |T|-2} \cdot 3^{|V \setminus T|/3})$ time.
\end{enumerate}
\end{lemma}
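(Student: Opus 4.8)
The plan is to analyze the recursion tree of Algorithm Main Enumeration. First I would establish correctness (part 1): I claim that for every minimal $T$-connecting set $S$ there is a sequence of recursive calls whose final step outputs $T \cup C = S$. The key structural fact is that $G[S]$, being minimal $T$-connecting, decomposes into induced paths in a tree-like fashion — starting from $u$, repeatedly pick an induced path to the next not-yet-connected terminal. One must check that the path the algorithm searches for (in the contracted graph $G'[V(G')\setminus X]$, with $v_1 = u$, $R = T'$, no intermediate vertex in $N[R]$) is exactly the kind of path arising in this decomposition, and that minimality of $S$ forces intermediate vertices of the chosen path to avoid $N[T']$ (otherwise a shortcut would contradict minimality). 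The $X$ set only removes vertices whose label is smaller than $label(v_2)$ among $N(C_u)$; I need that for the canonical decomposition (say, the one that always picks, among all legal next paths, the one whose first new vertex $v_2$ has the largest label) no needed vertex is ever placed in $X$. This is the standard "lexicographic/label-guided" trick to avoid generating the same set along two branches while still generating each set at least once.

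Next, part 2, the counting bound — this is the heart. Consider any output set $S \supseteq T$ with $|N[S]\setminus T| \le r$, produced at a leaf of the recursion tree via a chain of calls $\mathrm{MCS}(C_0,X_0)=\mathrm{MCS}(\emptyset,\emptyset), \mathrm{MCS}(C_1,X_1),\dots,\mathrm{MCS}(C_m,X_m)$ with $T\cup C_m = S$. At call level $j$ the algorithm invokes Lemma \ref{pr:basic_path} with some branch-depth budget; I want to argue that the budgets along the chain sum to at most $r$ (roughly: each internal vertex added to $C$, together with its newly-revealed neighbors outside $T$, gets "charged" once, and everything charged lies in $N[S]\setminus T$, so the total charge is $\le r$), and that the number of terminals drops by at least one at each level, so $m \le |T|-1$. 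Using Lemma \ref{pr:basic_path}, level $j$ has at most $l(t_j) \le 3^{t_j/3}$ path-choices where $t_j$ is that level's budget, giving at most $\prod_j 3^{t_j/3} = 3^{(\sum_j t_j)/3} \le 3^{r/3}$ leaves reachable with a fixed sequence of "which terminal gets connected when" decisions. The branching over which terminals get attached contributes the factor $\binom{|V\setminus T|}{|T|-2}$: here I'd use the folklore bound that a tree-like structure connecting $|T|$ terminals has at most $|T|-2$ internal branching vertices / the number of ways to choose the "skeleton" of Steiner-tree-like attachments is bounded by choosing $|T|-2$ of the non-terminal vertices (this is exactly the $|R|=1$-to-general-$R$ generalization the introduction alludes to, and matches the tight example in Figure 1). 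Multiplying the two factors gives $\binom{|V\setminus T|}{|T|-2}\cdot 3^{r/3}$. The hypothesis $|T|\le n/3$ is what makes $\binom{|V\setminus T|}{|T|-2}$ a meaningful (and the intended) bound.

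Finally part 3 follows from part 2 by taking $r = |V\setminus T|$ (the largest possible value of $|N[S]\setminus T|$), which bounds the number of leaves of the recursion tree by $\binom{|V\setminus T|}{|T|-2}\cdot 3^{|V\setminus T|/3}$; the internal nodes number at most $|T|-1$ times this, and each node does polynomial work plus one call to the Lemma \ref{pr:basic_path} enumerator, which by that lemma runs within a polynomial factor of its output size. Summing over all nodes of the recursion tree, the total time is $O^*\big(\binom{|V\setminus T|}{|T|-2}\cdot 3^{|V\setminus T|/3}\big)$, using Lemma \ref{le:contraction} to justify that the edge-contractions inside $\mathrm{MCS}$ preserve the correspondence between connecting sets.

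I expect the main obstacle to be part 2, specifically the charging argument: one must carefully define what gets charged at each recursive level so that (a) the charge at level $j$ is an upper bound for that level's branch-depth budget $t_j$ fed to Lemma \ref{pr:basic_path}, and (b) the charges across levels are disjoint subsets of $N[S]\setminus T$. The subtlety is that a neighbor vertex can be "revealed" at one level and only later absorbed into $C$, and the contraction of $C_u$ into $u$ changes neighborhoods; I'd need the definition of branch depth $b(P)=|N[\{v_1,\dots,v_{q-1}\}]|-1$ and the observation (from the proof of Lemma \ref{pr:basic_path}) that $b(P)$ equals the total number of children along the corresponding root-to-leaf path, so that summing $b(P)$ over the chain telescopes correctly to at most the number of non-terminal vertices in $N[S]$ that the algorithm ever "touches." Getting this bookkeeping exactly right — in particular handling the $X$-set removals so they neither inflate nor are ignored in the count — is the delicate step.
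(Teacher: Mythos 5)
Your overall architecture (correctness via a canonical decomposition, counting via a product of $3^{t_j/3}$ factors that telescopes to $3^{r/3}$ times a combinatorial factor for the branching, runtime from the leaf count) matches the paper, and your part 3 is essentially the paper's. But there are two genuine gaps. First, in part 1 your canonical choice is backwards: you propose always taking the legal next path whose first new vertex $v_2$ has the \emph{largest} label. The algorithm then adds every $w \in N(C_u)$ with $label(w) < label(v_2)$ to $X$, so if $S$ meets $N(C_u)$ in more than one vertex (which happens whenever the minimal connector branches at the current component), the smaller-labelled vertices of $S \cap N(C_u)$ are permanently forbidden and $S$ can never be completed on that branch. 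The paper instead takes $v_2$ to be the \emph{lowest}-labelled vertex of $(N(C_u)\cap S)\setminus X$; then everything added to $X$ has a smaller label than every vertex of $S\cap N(C_u)$, so the invariant $S\cap X=\emptyset$ is preserved and an exchange/maximality argument (maximize $|T\cup C|$ subject to $T\cup C\subseteq S$, $S\cap X=\emptyset$) closes the proof.

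Second, and more seriously, your source for the factor ${|V\setminus T| \choose |T|-2}$ --- ``choosing the $|T|-2$ branching vertices of a Steiner-tree-like skeleton'' --- is not a valid counting argument and is not what the paper does. Counting naively, the sequence of path-starting vertices $v_2^{(1)},\dots,v_2^{(m)}$ with $m\le |T|-1$ is an ordered sequence from $V\setminus T$, giving something like $|V\setminus T|^{|T|-1}$ choices, exponentially worse than the claimed binomial. The whole point of the label/$X$ mechanism is to prune this: the paper proves by induction on $\ell=|X|+p$ (with $p$ the recursion depth) that the number of calls MCS$(C,X)$ with $|N[C_u]\setminus T|=r$ is at most ${|X|+p \choose p-1}\cdot 3^{r/3}$, using the observation that from a fixed parent call each distinct choice of $v_2$ produces a distinct value of $|X|$, hence at most \emph{one} choice of $v_2$ can lead to a child with a prescribed $\ell$; summing over the possible $|X'|$ of the parent via $\sum_{k=0}^{n}{k\choose m}={n+1\choose m+1}$ yields the binomial, and $|X|+p\le |N[C_u]\setminus T|\le |V\setminus T|$ together with $|T|\le n/3$ converts ${|X|+p\choose p-1}$ into ${|V\setminus T|\choose |T|-2}$. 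You correctly flagged the charging of branch depths as delicate (and your telescoping of the $3^{t_j/3}$ factors is the right idea, matching the paper's use of $|N[C_u]\setminus(N[C_u']\cup T)|$), but you treat $X$ as a correctness device only; without using $X$ in the count, the binomial factor cannot be obtained, so the central claim of part 2 remains unproved in your proposal.
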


\begin{proof}
1.)
Let us first argue that every Minimal $T$-connecting vertex set is output by the algorithm. 
In the case where $|T| \leq 1$ the single vertex set $T$ is output by the algorithm.
In the remaining cases $|T|> 1$.

Let $S$ be a minimal $T$-connecting vertex set.
Our goal will be to show that 
there will be a call MCS$(C,X)$ performed by the algorithm in which $T \cup C = S$.
Initially $C= X = \emptyset$ so we trivially have $T \cup C \subseteq S$  and $X \cap S = \emptyset$.
Consider a call MCS$(C,X)$ where we have  $|T \cup C|$ maximized 
under the constraint $T \cup C \subseteq S$ and $S \cap X = \emptyset$.
We show by contradiction that $T \cup C = S$ for this call MCS$(C,X)$.
Assume, by sake of contradiction, that there is a terminal vertex not in $C_u$, 
i.e. not in the component of $G[T \cup C]$
containing $u$, i.e. that $T' \neq \emptyset$. 
%
Let $v_2$ be the lowest numbered vertex of $(N(C_u) \cap S) \setminus X$. 
As $S$ is minimal we have that $G[S]$ is connected but $G[S \sm \{v_2\}]$ is not connected. 
By the minimality of $S$ we have that $G'[S']$ is connected but $G'[S' \sm \{v_2\}]$ is not connected 
for $S' = (S \sm C_u) \cup \{u\}$.
Vertex $v_2$ is not a vertex of $C \cup T$ and as $S$ is minimal we have that each connected component of $G'[S' \sm \{v_2\}]$ contains a vertex of $T'$. If this was not the case, this component could simply be removed from $S$ without changing the connectivity between 
vertices of $T$.
Let $B$ be a connected component of $G'[S' \sm \{v_2\}]$ not containing $u$.
By the previous arguments $B$ contains a vertex of $T'$.
Therefore the call of the algorithm in Lemma \ref{pr:basic_path} on graph $G'$ with $R=T'$ will find a path 
$P=(u,v_2,\ldots,v_q)$ with all vertices in $S$ and with $v_q$ a neighbor of a vertex of $T'$ in $B$ and containing only vertex $v_2$ from $N(C_u)$.
This would lead to a recursive call where $C$ would be updated to $C \cup \{v_2, \ldots, v_q\} \subseteq S$,
and to $X$ there would not be added any vertices of $S$ as $v_2$ had lowest label among all vertices in $(N(C_u) \cap S) \setminus X$,
contradicting the maximality of $|T \cup C|$ under the constraint 
$T \cup C \subseteq S$ and 
$S \cap X = \emptyset$.

2.)
We bound the number of recursive calls in the algorithm and thus also the number of vertex sets that is output. 
Our objective will be to prove that the number of recursive calls MCS$(C,X)$ where 
$r = |N[C_u] \sm T|$ and 
$p$ is the number of times a path is added to $C$, 
is at most ${|X| +p \choose p-1} \cdot 3^{r/3}$. Note that $p$ is equal to the depth of the recursion.
Let $x=|X|$.
As $X \subset N(C_u)$ by the construction of the algorithm,
$p \leq |T|-1$, and at least one vertex is added to $C$ for each found path so $p \leq |C|$, 
we have that $x + p \leq |N[C_u] \sm T| = r$. 
Given that $|T|\leq n/3$ and thus $|V \sm T| \geq 2|T|$ it is clear that ${|V \sm T| \choose |T|-2} \geq {x+p \choose p-1}$ and the claim of the lemma follows. 

The proof will be by induction on $\ell = x+p$.
We assume without loss of generality that $|T| \geq 2$.
The first call is MCS$(\emptyset, \emptyset)$ in which case $p = 0$, 
and this is in fact the only call where $x+p \leq 0$.
The execution of MCS$(\emptyset, \emptyset)$ will call the algorithm of Lemma \ref{pr:basic_path} on $G'$ with $v_1=u$ and $R=T'$ and make a recursive call MCS$(C,X)$ for each path $P$ output by the algorithm of Lemma \ref{pr:basic_path}.
Consider such a call MCS$(C,X)$ originating from path $P$. 
This call will have $x=|X| \geq 0$,  $p = 1$, and it will have $r=|N[C_u] \sm T| \geq b(P)$. 
The number of paths $P$ with $b(P) \leq r$ output by the algorithm of Lemma \ref{pr:basic_path}
applied to the execution of MCS$(\emptyset, \emptyset)$ on $G'$ with $v_1=u$ and $R=T'$
is at most $3^{r/3}$.
Since $3^{r/3} \leq {x + p \choose p-1} \cdot 3^{r/3}$ for $p=1$ we have just established the 
base case $\ell = x+p \leq 1$ this also covers all cases where $p \leq 1$ in our induction.

In the induction step we consider the case where $\ell = x+p \geq 2$ and $p > 1$. 
Let MCS$(C',X')$ be a call and let 
$x' = |X'|$, 
$r' = |N[C_u'] \sm T|$, and 
$p'$ be the number of paths added, or equivalently the depth of the recursion.
By the induction hypothesis we assume that the bound holds for 
the number of calls MCS$(C',X')$ where $x' +p' \leq x +p -1$. 

Every call MCS$(C,X)$ where $x+p = \ell$ is created by a call MCS$(C',X')$ 
and a path $P = (v_1,v_2,\ldots,v_q)$ such that 
$C = C' \cup \{v_2,\ldots,v_q\}$,
$p' = p-1$, and 
$X= X' \cup \{w \in N(C_u'): label(w) < label(v_2)\}$.
As each vertex from $N(C_u') \sm X'$ chosen as $v_2$ will create a unique size 
of the set $X= X' \cup \{w \in N(C_u'): label(w) < label(v_2)\}$ for the next recursive call 
there is at most one choice for $v_2$ 
starting from a fixed MCS$(C',X')$ when it should lead to a recursive call MCS$(C,X)$
where $x+p = \ell$. 
However, there are choices for the sub-path vertices $(v_3, \ldots , v_q)$, but these vertices can be chosen only among 
$V \sm (N[C_u'] \cup T)$, since $v_2$ is fixed in $N(C_u')$ and the path 
$P$ is induced. Note that any such sub-path has branch-depth at most $|N[C_u] \sm (N[C_u'] \cup T)|$.
We can use Lemma \ref{pr:basic_path} to bound the number of such sub-paths, as follows.
By applying Lemma \ref{pr:basic_path} to the graph we get from $G[V \sm (N(C_u') \sm \{v_2\}) ]$ by contracting $C_u' \cup \{v_2\}$ to $u=v_1$ and with $R=T \setminus C_u'$ we deduce that the number of such sub-paths is at most $3^{(|N[C_u] \sm (N[C_u'] \cup T)|)/3}$.

This means that the number of calls MCS$(C,X)$ where $x+p= \ell$ is at most the 
number of calls MCS$(C',X')$ where 
$C' \subseteq C$, 
$X' \subseteq X$ thus $x' \leq x$, and  
$p' =  p-1$, times $3^{(|N[C_u] \sm (N[C_u'] \cup T)|)/3}$.
By the induction hypothesis we have that the 
number of calls  MCS$(C',X')$ where $x'+ p' < \ell$ is at most 
${x' + p-1 \choose p-1-1} \cdot 3^{|N[C_u'] \sm T|/3}$.
Multiplying these two factors we get ${x' + (p-1) \choose (p-1)-1} \cdot 3^{|N[C_u'] \sm T|/3} \cdot 3^{(|N[C_u] \sm (N[C_u'] \cup T)|)/3}$
which can be simplified to ${x' + (p-1) \choose (p-1) -1} \cdot 3^{(|N[C_u] \sm T|)/3}$.

Thus it remains to bound the number of calls MCS$(C',X')$ that can make a new recursive call 
MCS$(C,X)$ where $x+p = \ell$ to be at most ${x + p \choose p-1}$. 
We know that each call MCS$(C',X')$ can only make calls where $x+p = \ell$ when it uses the unique vertex $v_2 \in N(C_u') \sm X$ 
as the second vertex of the path. 
Thus it suffices to count these calls, and let $y$ be the number of such calls. 
We have that
\[
 y \leq \sum_{i = 0}^{x} {i + p-1 \choose p-1-1} \cdot 3^{(|N[C_u] \sm T|)/3}
\]

Using the standard observation that $\sum_{k = 0}^n {k \choose m} = {n+1 \choose m+1}$ we can conclude that 
$y \leq {x + p \choose p-1} \cdot 3^{(|N[C_u] \sm T|)/3}$ and the proof is completed. 

\bigskip
3.)
In the previous claim we bounded the number of recursive calls in the algorithm to ${|V \sm T| \choose |T|-2} \cdot 3^{r/3}$ 
vertex sets $S \supseteq T$ such that $|N[S] \sm T| \leq r$ and $|T| \leq n/3$, and as Lemma \ref{pr:basic_path} ensures that all paths in a 
single call can be enumerated within a polynomial delay it follows that the polynomial bound holds. 
\end{proof}

Using Lemma \ref{le:connected} we can make the following conclusion. 

\begin{theorem}\label{th:connected}
For an $n$ vertex graph $G =(V,E)$ and a terminal set $T \subseteq V$ where $|T| \leq n/3$
there is at most ${n - |T| \choose |T|-2} \cdot 3^{(n-|T|)/3}$ minimal $T$-connecting vertex sets 
and these can be enumerated in $O^*({n - |T| \choose |T|-2} \cdot 3^{(n-|T|)/3})$ time. 
\end{theorem}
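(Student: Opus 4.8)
The plan is to obtain the theorem as an essentially immediate corollary of Lemma~\ref{le:connected}. First I would dispose of the degenerate cases $|T| \leq 1$: there the only minimal $T$-connecting set is $T$ itself, which Algorithm Main Enumeration outputs directly, so from now on assume $|T| \geq 2$.

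For the counting bound I would invoke part~2 of Lemma~\ref{le:connected} with the choice $r = |V \setminus T| = n - |T|$. Every set $S$ that the algorithm outputs satisfies $S \supseteq T$ and $N[S] \subseteq V$, so the requirement $|N[S] \setminus T| \leq r$ is automatically met; hence the total number of sets output is at most ${n - |T| \choose |T| - 2}\cdot 3^{(n-|T|)/3}$. Part~1 of the same lemma guarantees that every minimal $T$-connecting set is among the output, so the number of minimal $T$-connecting sets is bounded by the same quantity.

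For the enumeration claim I would run Algorithm Main Enumeration, whose running time is $O^*({n-|T| \choose |T|-2}\cdot 3^{(n-|T|)/3})$ by part~3 of Lemma~\ref{le:connected}, and then post-process: since the algorithm may emit $T$-connecting sets that are non-minimal or repeated, for each output set $S$ I would check in polynomial time that $G[S]$ is connected and that $G[S \setminus \{w\}]$ is disconnected for every $w \in S \setminus T$ --- this pair of conditions characterizes minimality --- keeping $S$ only if both hold, and discarding duplicates. Since the number of output sets is within a polynomial factor of the stated bound and each test costs polynomial time, the overall procedure runs in $O^*({n-|T| \choose |T|-2}\cdot 3^{(n-|T|)/3})$ and lists exactly the minimal $T$-connecting sets.

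There is no genuine obstacle here: all of the real work --- the recursive branching, the use of Lemma~\ref{pr:basic_path}, and the binomial-times-exponential accounting --- already resides in the proof of Lemma~\ref{le:connected}. The only points needing a little care are that the bound of Lemma~\ref{le:connected}(2) must be read as counting \emph{all} output sets rather than only the minimal ones, and that the minimality-and-duplicate filtering is indeed polynomial per set, which the characterization above makes clear.
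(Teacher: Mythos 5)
Your proposal is correct and matches the paper, which derives Theorem~\ref{th:connected} directly from Lemma~\ref{le:connected} with no further argument; your instantiation $r = n-|T|$ and the polynomial-time minimality check (connectivity of $G[S]$ plus disconnection upon removing any non-terminal) are exactly the routine details left implicit there.
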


\section{The 2-Disjoint Connected Subgraphs problem}

Let us now use Theorem \ref{th:connected} to solve the 2-Disjoint Connected Subgraphs problem.
Recall that the problem is defined as follows:

\begin{flushleft}
{\sc 2-Disjoint Connected Subgraphs}\\
Input: A connected graph $G=(V,E)$ and two disjoint subsets of vertices $Z_1,Z_2 \subseteq V$.\\
Question: Does there exist two disjoint subsets $A_1,A_2$ of $V$, with 
$Z_1 \subseteq A_1, Z_2 \subseteq A_2$ and $G[A_1]$, $G[A_2]$ both connected?
\end{flushleft}

\begin{theorem}
There exists a polynomial space algorithm that solves the {\sc 2-Disjoint Connected Subgraphs} problem in $O^*(1.7804^n)$ time.
\end{theorem}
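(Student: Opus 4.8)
**Proof proposal for the final theorem (2-Disjoint Connected Subgraphs in $O^*(1.7804^n)$).**

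The plan is to reduce the problem to a bounded number of calls to the enumeration algorithm of Theorem~\ref{th:connected}. Observe that a YES-instance is equivalent to asking: is there a $Z_1$-connecting set $A_1$ with $Z_1 \subseteq A_1 \subseteq V \sm Z_2$ such that $G[V \sm A_1]$ is connected (note $Z_2 \subseteq V \sm A_1$)? If such an $A_1$ exists, then by shrinking $A_1$ we may assume $A_1$ is a \emph{minimal} $Z_1$-connecting set in the graph $G[V \sm Z_2]$: removing a vertex $w \in A_1 \sm Z_1$ that is not needed to keep $A_1$ connected only enlarges the complement, and enlarging a connected set that already contains $Z_2$ keeps it connected. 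So it suffices to enumerate all minimal $Z_1$-connecting sets $A_1$ in $G[V \sm Z_2]$ and, for each, test in polynomial time whether $G[V \sm A_1]$ is connected. By symmetry the same works with the roles of $Z_1,Z_2$ swapped.

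The first step is therefore: run the enumeration of Theorem~\ref{th:connected} with terminal set $T = Z_1$ on the graph $G' = G[V \sm Z_2]$, and for each output set $A_1$ check connectivity of $G[V \sm A_1]$; answer YES iff some check succeeds (and symmetrically with $Z_2$). The running time of this step is $O^*\big({n' - |Z_1| \choose |Z_1| - 2} \cdot 3^{(n' - |Z_1|)/3}\big)$ where $n' = n - |Z_2| = |V \sm Z_2|$, which is governed by the number $k_1 := |V \sm (Z_1 \cup Z_2)|$ of non-terminals together with $|Z_1|$; this is good when $|Z_1|$ is small relative to $n$, but the binomial coefficient blows up (and the precondition $|Z_1| \le n'/3$ fails) when $|Z_1|$ is a large fraction of $V \sm Z_2$.

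The second step handles the bad regime by a complementary brute force. When $|Z_1|$ (or $|Z_2|$) is large, the number of non-terminals $k = |V \sm (Z_1 \cup Z_2)|$ is correspondingly small, and the naive algorithm that tries all $2^{k}$ partitions of the non-terminals into the two sides (keeping $Z_1$ on one side, $Z_2$ on the other, then checking connectivity of both induced subgraphs) runs in $O^*(2^{k})$. Thus the strategy is: pick a threshold $\alpha n$ and a complementary analysis — if both $|Z_1| \le \alpha n$ and $|Z_2| \le \alpha n$, use the enumeration (run it for whichever of $Z_1,Z_2$ is smaller, to keep the binomial small); otherwise $k = n - |Z_1 \cup Z_2| \le n - \alpha n = (1-\alpha) n$ and use the $O^*(2^{(1-\alpha)n})$ brute force. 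One then optimizes $\alpha$ so that the worst of ${(1-\beta)n \choose \gamma n} \cdot 3^{(1-\beta)n/3}$ over the relevant range of $|Z_1|,|Z_2|$ balances against $2^{(1-\alpha)n}$; carrying out this numerical optimization (estimating the binomial via $\binom{a}{b} = 2^{a\,H(b/a)+o(n)}$ with the binary entropy $H$) is where the stated constant $1.7804$ comes from.

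The main obstacle I anticipate is precisely this last optimization: one must bound the expression ${n' - |Z_1| \choose |Z_1|-2} \cdot 3^{(n'-|Z_1|)/3}$ as a function of the two free parameters $|Z_1|/n$ and $|Z_2|/n$ over the region where it is invoked, show it never exceeds $1.7804^n$ there, and check that the crossover with the $2^{(1-\alpha)n}$ branch is consistent — and one must make sure the hypothesis $|T| \le n'/3$ of Theorem~\ref{th:connected} actually holds in the regime where we call it, adjusting the threshold if not (e.g. switching to brute force whenever $|Z_1| > (n - |Z_2|)/3$). The polynomial-space claim is immediate since the enumeration algorithm is backtracking with polynomial delay and the connectivity checks are in polynomial space, so nothing extra is needed there. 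I would also double-check the easy correctness reduction: that minimality of $A_1$ in $G[V\sm Z_2]$ really is without loss of generality, i.e. that a witness $(A_1,A_2)$ can always be massaged into one where $A_1$ is minimal $Z_1$-connecting and $A_2 = V \sm A_1$, which follows because $G[A_2]$ connected and $A_2 \supseteq Z_2$ is preserved when $A_2$ grows.
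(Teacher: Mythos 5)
Your high-level plan is the same as the paper's (enumerate minimal $Z_1$-connecting sets via Theorem~\ref{th:connected} in one regime, brute-force over non-terminals in the complementary regime, balance the two), but there are two genuine gaps. First, the correctness reduction as you state it is wrong: you ask for a minimal $Z_1$-connecting $A_1$ with $G[V \sm A_1]$ connected, justified by the claim that ``enlarging a connected set that already contains $Z_2$ keeps it connected.'' That claim is false --- a vertex $w$ removed from $A_1$ during shrinking need not have any neighbor in $V \sm A_1$, so the complement can become disconnected (e.g.\ a pendant vertex of $Z_1$'s side falls off), and your algorithm would then answer NO on a YES-instance. The correct test, which the paper uses, is whether $Z_2$ lies in a \emph{single connected component} of $G \sm A_1$; one then takes $A_2$ to be that component (this suffices for the disjoint-subsets formulation, and for the partition formulation the remaining components can be absorbed into $A_1$'s side since each has a neighbor in $A_1$). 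Shrinking $A_1$ only enlarges $G \sm A_1$ and so preserves \emph{this} property, which is why minimality is without loss of generality.

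Second, your case split does not balance to $1.7804^n$. You enumerate only when \emph{both} $|Z_1|,|Z_2| \le \alpha n$, so in the complementary case you can only guarantee $k = n - |Z_1| - |Z_2| \le (1-\alpha)n$; balancing ${(1-2\alpha)n \choose \alpha n} \cdot 3^{(1-2\alpha)n/3}$ against $2^{(1-\alpha)n}$ lands around $1.84^n$, not $1.7804^n$. The paper instead assumes $|Z_1| \le |Z_2|$ and splits solely on whether $|Z_1| \le \alpha n$: in the enumeration branch the larger set $Z_2$ only helps (it is deleted from the graph, so the relevant bound is still ${(1-2\alpha)n \choose \alpha n - 2} \cdot 3^{(1-2\alpha)n/3}$ using $|Z_2| \ge |Z_1|$), while in the brute-force branch \emph{both} terminal sets exceed $\alpha n$, giving $k \le (1-2\alpha)n$ and cost $2^{(1-2\alpha)n}$. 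With $\alpha = 0.0839$ both branches come in under $1.7804^n$. You also leave the numerical optimization itself undone, which you acknowledge, but with your split no choice of $\alpha$ reaches the claimed constant. The polynomial-space observation is fine.
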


\begin{proof}
Let us assume without loss of generality that $|Z_1| \leq |Z_2|$ and let $\alpha = |Z_1|/n$; note that
 $0 < \alpha \leq 0.5$.
The algorithm has a first stage that finds a list of potential candidates for $A_1$ and a second stage that checks each
candidate to see if it can be used as a solution. In the first stage we choose between two different strategies depending on whether or not $\alpha > 0.0839$.

Consider first the case where $\alpha \leq 0.0839$.
Vertices of $Z_2$ are of no use when searching for a potential set $A_1$ so it suffices to consider the graph $G[V \sm Z_2]$.
By the algorithm for {\sc Enumeration of Minimal $T$-Connecting Sets} of Theorem~\ref{th:connected} we know that  for $|Z_1| \leq n/3$ in the graph $G[V \sm Z_2]$ all minimal $Z_1$-connecting sets can be enumerated in 
$O^*( {n - |Z_1| -|Z_2| \choose |Z_1|-2} \cdot 3^{(n-|Z_1|-|Z_2|)/3})$ time.
As $|Z_1| \leq |Z_2|$ it is clear that $\alpha n \leq |Z_2|$.
The number $|Z_2|$ only contributes negatively so we can observe that
\[ {n - |Z_1| -|Z_2| \choose |Z_1|-2} \cdot 3^{(n-|Z_1|-|Z_2|)/3} \leq
 {(1-2\alpha)n \choose \alpha n -2} \cdot 3^{(1-2\alpha)n/3}.
\]

By using the Stirling approximation we know that 
${(1-2\alpha)n \choose \alpha n -2}$ is 
$O^*((\frac{\beta^\beta}{\alpha^\alpha \cdot (\beta-\alpha)^{(\beta - \alpha)}})^n)$ 
where $\beta = (1-2\alpha)$.
It is not hard to verify that the maximum value of 
${(1-2\alpha)n \choose \alpha n -2} \cdot 3^{(1-2\alpha)n/3}$ for $0 < \alpha \leq 0.0839$ occurs when $\alpha = 0.0839$ and that 
${(1-2\alpha)n \choose \alpha n -2} \cdot 3^{(1-2\alpha)n/3} \leq 1.7804^n$ for $\alpha = 0.0839$. Thus, we can conclude that when $\alpha \leq 0.0839$
a list of all
minimal $Z_1$-connecting sets can be found in time $O^*(1.7804^n)$.

Consider now the case where $\alpha > 0.0839$.
In this case the algorithm simply loops over all subsets of $V \sm (Z_1 \cup Z_2)$ to list every vertex subset $A \subseteq (V \sm Z_2)$ where $Z_1 \subseteq A$.
As $\alpha > 0.078$ and $\alpha n = |Z_1| \leq |Z_2|$ we get that the number of such subsets is at most $2^{n-2\alpha} \leq 1.7804^n$ and
they can be found in $O^*(1.7804^n)$ time.

For the second stage of the algorithm, for every listed set $A$,  the algorithm tests if vertices of $Z_2$ are contained in the same connected 
component of $G \sm A$ and if
so the algorithm returns the solution with $A_1 = A$ and $A_2$ being the vertices of the connected component of $G \sm A$ containing $Z_2$.
This is clearly a solution to the problem. Conversely, if there is a solution $A_1,A_2$ to the problem, then there is clearly one where
$A_1$ is a minimal $Z_1$-connecting set.

Finally, observe that the algorithm uses polynomial space as a simple branching algorithm is used for both cases.
\end{proof} 
%
%
%
%

\section*{Conclusion}

The graph in Figure 1 shows that our algorithm for {\sc Enumeration of Minimal $T$-Connecting Sets} given by Theorem \ref{th:connected}
is optimal, up to polynomial factors, for the case $|T|=2$. 
Is the algorithm optimal, up to polynomial factors, also for larger $T$, let us say $|T| \leq 0.1n$?

Let us remark that our algorithm for {\sc Enumeration of Minimal $T$-Connecting Sets} can be used to give a 
$O^*({|V \setminus T| \choose |T|-2} \cdot 3^{\frac{|V \setminus T|}{3}})$ algorithm for {\sc Steiner Tree with unit weights} on terminal vertices $T$. This is upper bounded by 
$O^*(1.8778^ n)$ when balanced with the standard brute force search, but will not beat the fastest algorithm for this problem, which is by Nederlof \cite{Nederlof09} and has runtime $O^*(1.3533^n)$ using polynomial space.

The algorithm given in this paper for {\sc Enumeration of Minimal $T$-Connecting Sets} may have more applications in the future, apart from {\sc 2-Disjoint Connected Subgraphs}, in particular for problems where the enumeration of all solutions is required.

\bibliographystyle{plain}
\bibliography{2dis}

\end{document}